\theoremstyle{plain}			
\newtheorem{thm}{Theorem}[section]
\newtheorem{prop}[thm]{Proposition}
\theoremstyle{definition}		
\newtheorem{defn}[thm]{Definition}
\newtheorem{rmk}[thm]{Remark}
\begin{document}
\title{Jordan $C^{\ast}$-Algebras and Supergravity} 
\author{Michael Rios\footnote{email: mrios4@calstatela.edu}\\\\\emph{California State University, Los Angeles}\\\emph{Physics Graduate Program}
\\\emph{5151 State University Drive}\\\emph{Los Angeles, CA 90032, USA}  } \date{\today}\maketitle
\begin{abstract}
It is known that black hole charge vectors of $\mathcal{N}=8$ and magic $\mathcal{N}=2$ supergravity in four and five dimensions can be represented as elements of Jordan algebras of degree three over the octonions and split-octonions and their Freudenthal triple systems.  We show both such Jordan algebras are contained in the exceptional Jordan $C^{\ast}$-algebra and construct its corresponding Freudenthal triple system and single variable extension.  The transformation groups for these structures give rise to the complex forms of the U-duality groups for $\mathcal{N}=8$ and magic $\mathcal{N}=2$ supergravities in three, four and five dimensions.  
\\\\
$Keywords:$ Jordan $C^{\ast}$-algebras, Supergravity, U-duality.
\end{abstract}

\newpage
\tableofcontents
\section{Introduction}
\indent It was shown that extremal black holes in $\mathcal{N}\geq 2$, $D=3,4,5,6$ supergravities on symmetric spaces can be described by Jordan algebras and their corresponding Freudenthal triple systems, with the Bekenstein-Hawking entropy of the black holes given by algebraic invariants over these structures \cite{4}-\cite{23},\cite{28,31,42,54,57,58}.  Such supergravities are called homogeneous supergravities \cite{10}, and include $\mathcal{N}=8$, $D=4$ supergravity (M-theory on $T^7)$, described by the Jordan algebra of degree three over the split-octonions.  Of the four $\mathcal{N}=2$ supergravity theories defined by simple Jordan algebras of degree three (magic supergravities), all but one, the exceptional magic $\mathcal{N}=2$ supergravity, can be obtained by a consistent truncation of the maximal $\mathcal{N}=8$ supergravity \cite{12}. The exceptional magic $\mathcal{N}=2$ supergavity corresponds to the exceptional Jordan algebra of degree three over the octonion composition algebra \cite{24}-\cite{30},\cite{catto,45,51,53}, and it has been recently noted \cite{57} there are physical and mathematical obstacles which make its study appear less well motivated than the $\mathcal{N}=8$ case.\\
\indent In this note, we show the maximal $\mathcal{N}=8$ and exceptional magic $\mathcal{N}=2$ supergravities can be unified using the exceptional Jordan $C^{\ast}$-algebra, a Jordan algebra of degree three over the bioctonions, which contains both Jordan algebras of degree three over the octonions and split-octonions.  We construct its corresponding Freudenthal triple system and single variable extension and give support for such constructions in light of non-real solutions arising in $\mathcal{N}=8$ and magic $\mathcal{N}=2$, $D=4$ supergravities reduced to $D=3$. 
\section{$\mathcal{N}=8$ and Magic $\mathcal{N}=2$ Supergravity}
\indent We review the $\mathcal{N}\geq 2$ homogeneous supergravities in $D=3,4,5,6$ over Jordan algebras of degree two and three and give their corresponding U-duality groups and entropy expressions for black hole and string solutions.
\subsection{Black Strings in $D=6$}
\indent The $\mathcal{N}=8$ and magic $\mathcal{N}=2$, $D=6$ supergravities arise as uplifts of $\mathcal{N}=8$ and magic $\mathcal{N}=2$, $D=5$ supergravities with $n_V=27$, $n_V=15$, $n_V=9$ and $n_V=6$ vector fields \cite{21}.  They enjoy $SO(5,5)$, $SO(9,1)$, $SO(5,1)$, $SO(3,1)$ and $SO(2,1)$ U-duality symmetry since the $D=6$ vector multiplets in the Coulomb phase (after Higgsing) transform as spinors of dimension 16, 8, 4 and 2, respectively \cite{21}.  One can associate a black string solution with charges $q_I$ ($I=1,...,n_V$ for $n_V=10,6,4,3$) an element
\begin{equation}
J=\sum_{I=1}^{n}q^Ie_I=\left(\begin{array}{cc} r_1 & A \\ \overline{A}& r_2  \end{array}\right)\quad r_i\in\mathbb{R}, A\in\mathbb{A}
\end{equation}
of a Jordan algebra $J^{\mathbb{A}}_2$ of degree two, where the $e_I$ form the $n_V$-dimensional basis of the Jordan algebra over a composition algebra $\mathbb{A}=\mathbb{O}_s, \mathbb{O}, \mathbb{H}, \mathbb{C}, \mathbb{R}$.  Elements of $J^{\mathbb{A}}_2$ transform as the (dim $\mathbb{A}+2$) representation of $SL(2,\mathbb{A})$, the $\mathbf{10}$, $\mathbf{10}$, $\mathbf{6}$, $\mathbf{4}$, $\mathbf{3}$ of $SO(5,5)$, $SO(9,1)$, $SO(5,1)$, $SO(3,1)$ and $SO(2,1)$, respectively \cite{4}.  In all cases, the black string entropy is given by \cite{4}:
\begin{equation}
S_{D=5,BH}(J)=\pi\sqrt{|I_2(J)|}
\end{equation}
where
\begin{equation}
I_2(J)=\textrm{det}(J)=r_1r_2-A\overline{A}.
\end{equation}
The U-duality orbits are distinguished by rank via
\begin{equation}\begin{array}{rcl}
\textrm{Rank}\thinspace J=2\quad \textrm{iff}\quad I_2(J)\neq 0\quad\quad\quad\qquad\hspace{1pt} S\neq 0,\thinspace \textrm{1/4-BPS}\hfill\\
\textrm{Rank}\thinspace J=1\quad \textrm{iff}\quad I_2(J)=0,\thinspace J\neq 0\qquad S=0,\thinspace\textrm{1/2-BPS}\hfill.
\end{array}
\end{equation}

\subsection{Black Holes in $D=5$}
\indent In $D=5$, the $\mathcal{N}=8$ and magic $\mathcal{N}=2$ supergravities are coupled to 27, 15, 9, 6 vector fields with U-duality symmetry groups $E_{6(6)}$, $E_{6(-26)}$, $SU^\ast(6)$, $SL(3,\mathbb{C})$, $SL(3,\mathbb{R})$, for Jordan algebras of degree three over composition algebras $\mathbb{A}=\mathbb{O}_s, \mathbb{O}, \mathbb{H}, \mathbb{C}, \mathbb{R}$,  respectively \cite{4,13,14}.  The orbits of BPS black hole solutions were classified by Ferrara et al. \cite{13,22,57} by studying the underlying Jordan algebras of degree three under the actions of their reduced structure groups, $\textrm{Str}_0(J_3^{\mathbb{A}})$, which correspond to the U-duality groups of the $\mathcal{N}=8$ and magic $\mathcal{N}=2$, $D=5$ supergravities.  This is accomplished by associating a given black hole solution with charges $q_I$ ($I=1,...,n_V)$ an element
\begin{equation}
J=\sum_{I=1}^{n}q^Ie_I=\left(\begin{array}{ccc} r_1 & A_1 & \overline{A}_2 \\ \overline{A}_1 & r_2 & A_3 \\ A_2 & \overline{A}_3 & r_3  \end{array}\right)\quad r_i\in\mathbb{R}, A_i\in\mathbb{A}
\end{equation}
of a Jordan algebra of degree three $J_3^{\mathbb{A}}$ over a composition algebra $\mathbb{A}=\mathbb{O}_s, \mathbb{O}, \mathbb{H}, \mathbb{C}, \mathbb{R}$, where $e_I$ form a basis for the $n_V$-dimensional Jordan algebra.  This establishes a correspondence between Jordan algebras of degree three and the charge spaces of the extremal black hole solutions \cite{13}.  In all cases, the entropy of a black hole solution can be written \cite{4,13,23} in the form
\begin{equation}
S_{D=5,BH}(J)=\pi\sqrt{|I_3(J)|}
\end{equation}
where $I_3$ is the cubic invariant given by 
\begin{equation}
I_3(J)=\textrm{det}(J)
\end{equation}
and 
\begin{equation}
\textrm{det}(J)=r_1r_2r_3-r_1||A_1||^2-r_2||A_2||^2-r_3||A_3||^3+2\textrm{Re}(A_1 A_2 A_3).
\end{equation}
The U-duality orbits are distinguished by rank via
\begin{equation}\begin{array}{rcl}
\textrm{Rank}\thinspace J=3\quad \textrm{iff}\quad I_3(J)\neq 0\hspace{3pt}\qquad\quad\qquad S\neq 0,\thinspace \textrm{1/8-BPS}\hfill\\
\textrm{Rank}\thinspace J=2\quad \textrm{iff}\quad I_3(J)=0, J^{\natural} \neq 0\qquad S=0,\thinspace \textrm{1/4-BPS}\hfill\\
\textrm{Rank}\thinspace J=1\quad \textrm{iff}\quad J^{\natural} = 0,\thinspace\thinspace J\neq 0\thinspace\thinspace\quad\qquad S=0,\thinspace\textrm{1/2-BPS}\hfill
\end{array}
\end{equation}
where the quadratic adjoint map is given by
\begin{equation}
J^{\natural}=J\times J = J^2-\textrm{tr}(J)J+\frac{1}{2}(\textrm{tr}(J)^2-\textrm{tr}(J^2))I.
\end{equation}
\subsection{Black Holes in $D=4$}
\indent For $\mathcal{N}=8$ and magic $\mathcal{N}=2$, $D=4$ supergravities there is a correspondence between the field strengths of the vector fields and their magnetic duals and elements of a Freudenthal triple system (FTS) $\mathfrak{M}(J^{\mathbb{A}}_3)$ over a Jordan algebra of degree three, $J^{\mathbb{A}}_3$  \cite{14}.  The correspondence is explicitly:
\begin{equation}
\left(\begin{array}{ccc} F_{\mu\nu}^0 & & F_{\mu\nu}^i \\ & & \\ \tilde{F}_i^{\mu\nu} & & \tilde{F}_0^{\mu\nu} \end{array}\right) \Longleftrightarrow
 \left(\begin{array}{ccc} \alpha & & X \\ & & \\ Y & & \beta \end{array}\right)=\mathcal{X}\in \mathfrak{M}(J^{\mathbb{A}}_3),
\end{equation}
where $\alpha,\beta\in\mathbb{R}$ and $X,Y\in J^{\mathbb{A}}_3$.  $F_{\mu\nu}^i$ ($i=1,...,n_V$) denote the field strengths of the vector fields from $D=5$ and $F_{\mu\nu}^0$ is the $D=4$ graviphoton field strength coming from the $D=5$ graviton \cite{14}.
Using the correspondence, one can associate the entries of an FTS element with electric and magnetic charges $\{q_0,q_i,p^0,p^i\}\in\mathbb{R}^{2n_{V} + 2}$ of an $\mathcal{N}=8$ or magic $\mathcal{N}=2$, $D=4$ extremal black hole via the relations \cite{12,14}:
\begin{equation}
\alpha=p^0 \quad \beta=q_0 \quad X=p^ie_i \quad Y=q_ie^i.
\end{equation}
Setting $p^I=(\alpha,X)$ and $q_I=(\beta,Y)$ ($I=0,...,n_V$), the Bekenstein-Hawking entropy of extremal black hole solutions is given by \cite{12}:
\begin{equation}
S_{D=4,BH}(\mathcal{X})=\pi \sqrt{|I_4(p^I,q_I)|},
\end{equation}
where $I_4$ is the quartic invariant of the FTS, preserved by the automorphism groups $\textrm{Aut}(\mathfrak{M}(J^{\mathbb{A}}_3))$, the U-duality groups for the corresponding $D=4$ supergravities.  Explicitly, the U-duality groups are $E_{7(7)}$, $E_{7(-25)}$, $SO^\ast(12)$, $SU(3,3)$ and $Sp(6,\mathbb{R})$ for composition algebras algebras $\mathbb{A}=\mathbb{O}_s,\mathbb{O},\mathbb{H},\mathbb{C},\mathbb{R}$, respectively.  The U-duality orbits are given by the FTS rank via\footnote[1]{Explicit forms for $T(\mathcal{X},\mathcal{X},\mathcal{X})$ and $\Lambda(\mathcal{X},\mathcal{Y})$ are given in section 3.4.}
\begin{equation}\begin{array}{rcl}
\textrm{Rank}\thinspace \mathcal{X}=4\quad \textrm{iff}\quad I_4(\mathcal{X})\neq 0, \quad (I_4(\mathcal{X})< 0)\qquad\qquad\quad\hspace{6pt} \textrm{non-BPS}\hfill\\
\textrm{Rank}\thinspace \mathcal{X}=4\quad \textrm{iff}\quad I_4(\mathcal{X})\neq 0, \quad (I_4(\mathcal{X})> 0)\qquad\qquad\quad\hspace{6pt} \textrm{1/8-BPS}\hfill\\
\textrm{Rank}\thinspace \mathcal{X}=3\quad \textrm{iff}\quad T(\mathcal{X},\mathcal{X},\mathcal{X})\neq 0,\thinspace I_4(\mathcal{X})= 0\quad\quad\quad\quad\thinspace\thinspace\thinspace\thinspace \textrm{1/8-BPS}\hfill\\
\textrm{Rank}\thinspace \mathcal{X}=2\quad \textrm{iff}\quad \exists\mathcal{Y}\thinspace \Lambda(\mathcal{X},\mathcal{Y})\neq 0,\thinspace T(\mathcal{X},\mathcal{X},\mathcal{X})=0\qquad\thinspace\textrm{1/4-BPS}\hfill\\
\textrm{Rank}\thinspace \mathcal{X}=1\quad \textrm{iff}\quad \forall \mathcal{Y}\thinspace \Lambda(\mathcal{X},\mathcal{Y})=0,\thinspace \mathcal{X}\neq 0\qquad\qquad\qquad\hspace{3pt} \textrm{1/2-BPS}\hfill
\end{array}
\end{equation}
where 
\begin{equation}
\Lambda(\mathcal{X},\mathcal{Y})=3T(\mathcal{X},\mathcal{X},\mathcal{Y})+\mathcal{X}\{\mathcal{X},\mathcal{Y}\} .
\end{equation}
\subsection{Black Holes in $D=3$}
By extending the space of electric and magnetic charges $(p^I,q_I)\in\mathfrak{M}(J^{\mathbb{A}}_3)$ by a real variable, we recover a space $\mathfrak{T}(J^{\mathbb{A}}_3)$ acted on by the three-dimensional U-duality group $G_3=\textrm{Inv}(\mathfrak{T}(J^{\mathbb{A}}_3))$ \cite{28}.  $G_3$ is $E_{8(8)}$ or $E_{8(-24)}$, for dimensionally reduced $\mathcal{N}=8$ or exceptional magic $\mathcal{N}=2$, $D=4$ supergravity \cite{10,12}.  The quartic symplectic distance $d(\mathbf{X},\mathbf{Y})$ between any two solutions $\mathbf{X}=(\mathcal{X},x)$ and $\mathbf{Y}=(\mathcal{Y},y)$ in $\mathfrak{T}(J^{\mathbb{A}}_3)$ is given by:
\begin{equation}
d(\mathbf{X},\mathbf{Y})=I_4(\mathcal{X}-\mathcal{Y})-(x-y+\{\mathcal{X},\mathcal{Y}\})^2.
\end{equation}
The norm of an arbitrary solution $\mathbf{X}=(\mathcal{X},x)\in\mathfrak{T}(J^{\mathbb{A}}_3)$ is computed as:
\begin{equation}
\mathcal{N}(\mathbf{X})=d(\mathbf{X},0)=I_4(\mathcal{X})-x^2.
\end{equation}
The U-duality group $G_3$ leaves light-like separations $d(\mathbf{X},\mathbf{Y})=0$ invariant.  Given an arbitrary  $\mathbf{X}=(\mathcal{X},x)\in\mathfrak{T}(J^{\mathbb{A}}_3)$, G\"{u}naydin et al. noticed \cite{28} non-real values arise for $x$ when zero-norm solutions contain a negative-valued quartic invariant $I_4$.  To remedy this, the representation space $\mathfrak{T}(J^{\mathbb{A}}_3)\sim \mathbb{R}^{57}$ is complexified, leading to a realization of $E_{8}(\mathbb{C})$ on $\mathbb{C}^{57}$. 

\section{Bioctonions and Jordan $C^{\star}$-Algebras}
\subsection{Composition Algebras}
Let $V$ be a finite dimensional vector space over a field $\mathbb{F}=\mathbb{R},\mathbb{C}$.  An $algebra$ $structure$ on $V$ is a bilinear map
\begin{eqnarray}
V\times V \rightarrow V \\
(x,y)\mapsto x\bullet y.
\nonumber
\end{eqnarray}
A $composition$ $algebra$ is an algebra $\mathbb{A}=(V,\bullet)$, admitting an identity element, with a non-degenerate quadratic form $\eta$ satisfying
\begin{eqnarray}
\forall x,y\in\mathbb{A}\quad \eta(x\bullet y)=\eta(x)\eta(y).
\end{eqnarray}
If $\exists x\in\mathbb{A}$ such that $x\neq 0$ and $\eta(x)=0$, $\eta$ is said to be $isotropic$ and gives rise to a $split$ $composition$ $algebra$.  When $\forall x\in\mathbb{A}$, $x\neq 0$, $\eta(x)\neq 0$, $\eta$ is $anisotropic$ and yields a $composition$ $division$ $algebra$ \cite{52}.
\begin{prop}
A finite dimensional vector space $V$ over $\mathbb{F}=\mathbb{R},\mathbb{C}$ can be endowed with a composition algebra structure if and only if $\textrm{dim}_{\mathbb{F}}(V)=1,2,4,8$.  If $\mathbb{F}=\mathbb{C}$, then for a given dimension all composition algebras are isomorphic.  For $\mathbb{F}=\mathbb{R}$ and $\textrm{dim}_{\mathbb{F}}(V)=8$ there are only two non-isomorphic composition algebras: the octonions $\mathbb{O}$ for which $\eta$ is anisotropic and the split-octonions $\mathbb{O}_s$ for which $\eta$ is isotropic and of signature $(4,4)$.  Moreover for all composition algebras, the quadratic form $\eta$ is uniquely defined by the algebra structure.
\end{prop}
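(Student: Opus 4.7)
The plan is to follow the classical Hurwitz--Cayley--Dickson route. First I polarize the multiplicativity identity $\eta(x\bullet y)=\eta(x)\eta(y)$ of the non-degenerate quadratic form to obtain bilinear consequences for the associated symmetric bilinear form $\langle x,y\rangle = \frac{1}{2}(\eta(x+y)-\eta(x)-\eta(y))$. These yield a canonical involution $\bar x := 2\langle x,\mathbf{1}\rangle\mathbf{1}-x$ satisfying $x\bullet\bar x = \bar x\bullet x = \eta(x)\mathbf{1}$, together with the degree-two relation $x\bullet x - t(x)\,x + \eta(x)\mathbf{1} = 0$, where $t(x)=2\langle x,\mathbf{1}\rangle$. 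Since $\eta(x)\mathbf{1} = x\bullet\bar x$ and $\bar x = t(x)\mathbf{1}-x$, the quadratic form $\eta$ is expressible purely in terms of the multiplication, which is the uniqueness statement.

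For the dimension classification I invoke Cayley--Dickson doubling. Given a proper composition subalgebra $B\subsetneq\mathbb{A}$ on which $\eta$ is non-degenerate, one picks $i\in B^{\perp}$ with $\eta(i)\neq 0$ and checks that $B\oplus B\bullet i$ is again a composition subalgebra, whose product is controlled by the scalar $\lambda=-\eta(i)$. Starting with $\mathbb{F}\cdot\mathbf{1}$ and iterating, one obtains composition subalgebras of dimensions $1,2,4,8$, losing commutativity at step two, associativity at step three, and alternativity at step four. Since the composition identity forces at least alternativity, no sixteen-dimensional doubling survives, capping the dimension at $8$.

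For $\mathbb{F}=\mathbb{C}$, any non-degenerate quadratic form on $\mathbb{C}^{n}$ is equivalent to the standard one, so the doubling parameters can be renormalized to $\lambda=-1$ at every step and the isomorphism class is determined by the dimension alone. For $\mathbb{F}=\mathbb{R}$ and $\dim V=8$, the signature $(p,q)$ of $\eta$ is constrained by multiplicativity: the norm decomposes as a product of the doubling norms, and in eight real dimensions only the totally positive case (the octonions $\mathbb{O}$, with $\eta$ anisotropic of signature $(8,0)$) and the maximally isotropic case (the split-octonions $\mathbb{O}_{s}$, with signature $(4,4)$) arise. Any intermediate signature would produce zero divisors $u,v$ with $\eta(u),\eta(v)\neq 0$ but $u\bullet v=0$, contradicting $\eta(u\bullet v)=\eta(u)\eta(v)$.

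The main obstacle I anticipate is the termination of the doubling at dimension $8$. One must show carefully that the composition identity, combined with the loss of alternativity under a fifth doubling, forces the sedenion-type construction to admit zero divisors of nonzero norm, thereby violating multiplicativity. The remaining steps are essentially polarization manipulations and standard Witt-theoretic classification of quadratic forms over $\mathbb{R}$ and $\mathbb{C}$.
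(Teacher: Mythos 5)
The paper does not actually prove this proposition: its ``proof'' is a pointer to Springer and Veldkamp (Prop.~1.8.1, Section~1.10, Cor.~1.2.4), and the argument underlying those cited results is precisely the Hurwitz/Cayley--Dickson route you outline --- polarize the composition identity, extract the conjugation and the rank-two relation $x\bullet x - t(x)\,x + \eta(x)\mathbf{1}=0$, and double proper nondegenerate subalgebras until the non-associativity of the octonions blocks a sixteen-dimensional step. Most of your sketch is therefore sound and is essentially the intended proof. Two small points: as written your involution $\bar{x}=2\langle x,\mathbf{1}\rangle\mathbf{1}-x$ is defined via $\eta$, so the uniqueness claim is circular in that form; the non-circular version observes that $x\bullet x$ lies in $\mathrm{span}\{\mathbf{1},x\}$ by the rank-two relation, so $t(x)$ and $\eta(x)$ can be read off as coordinates of $x\bullet x$ when $\mathbf{1},x$ are independent (and $\eta(c\mathbf{1})=c^2$ otherwise).

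The one step that genuinely fails is your exclusion of intermediate signatures over $\mathbb{R}$ in dimension $8$. You propose to rule out signatures such as $(7,1)$ or $(6,2)$ by exhibiting zero divisors $u,v$ with $\eta(u),\eta(v)\neq 0$ and $u\bullet v=0$. But in any composition algebra $u\bullet v=0$ forces $\eta(u)\eta(v)=\eta(u\bullet v)=0$, so such elements can never exist regardless of signature; the proposed contradiction is vacuous and discriminates nothing. The correct mechanism is already latent in your doubling setup: the norm form of an $8$-dimensional composition algebra is the threefold tensor product $\langle 1,-\lambda_1\rangle\otimes\langle 1,-\lambda_2\rangle\otimes\langle 1,-\lambda_3\rangle$ of the binary forms produced by the three doublings, i.e.\ a $3$-fold Pfister form. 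Over $\mathbb{R}$ each $\lambda_i$ normalizes to $\pm 1$; if all three factors are definite the form is positive definite of signature $(8,0)$ (the case $(0,8)$ being excluded since $\eta(\mathbf{1})=1$), while if any factor is a hyperbolic plane the entire tensor product is hyperbolic, of signature $(4,4)$. No intermediate signature can occur. To pass from ``two possible norm forms'' to ``two isomorphism classes'' you also need the theorem that composition algebras with equivalent norm forms are isomorphic (Springer--Veldkamp Thm.~1.7.1); you use this implicitly in the $\mathbb{C}$ case as well, and it should be invoked explicitly.
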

\begin{proof}
See prop. 1.8.1., section 1.10 and corollary 1.2.4 in the book by Springer and Veldkamp \cite{52}.
\end{proof}

\subsection{Bioctonions}

The $bioctonion$ algebra $\mathbb{O}_{\mathbb{C}}$ is a composition algebra of dimension $8$ over $\mathbb{C}$, defined as the complexification of the octonion algebra $\mathbb{O}$ \cite{24}
\begin{equation}
\mathbb{O}_{\mathbb{C}}=\mathbb{O}\otimes\mathbb{C}=\{\psi=\varphi_1+i\varphi_2\thinspace\thinspace |\thinspace\thinspace\varphi_i\in\mathbb{O},\thinspace i^2=-1\}
\end{equation}
where the imaginary unit `$i$' is assumed to commute with all imaginary basis units $e_j$ ($j=1,2,...,7$) of $\mathbb{O}$.  The $octonionic$ $conjugate$ of an element of $\mathbb{O}_{\mathbb{C}}$ is taken to be
\begin{equation}
\overline{\psi}=\overline{\varphi}_1+i\overline{\varphi}_2
\end{equation}
with which we define a quadratic form $\eta:\mathbb{O}_{\mathbb{C}}\rightarrow\mathbb{C}$
\begin{equation}
\eta(\psi)=\psi\overline{\psi}.
\end{equation}
By application of the Moufang identities for the octonions \cite{25}, it can be shown that $\forall\psi\in\mathbb{O}_{\mathbb{C}}\quad\eta(\psi_1\psi_2)=\eta(\psi_1)\eta(\psi_2)$, making $\mathbb{O}_{\mathbb{C}}$ a composition algebra over $\mathbb{C}$.

\begin{table}
\begin{center}
  \begin{tabular}{ | c | c | c | c | c | c | c | c | }
    \hline
        & $e_1$ & $e_2$ & $e_4$ &  $ie_7$ & $ie_3$ & $ie_6$  & $ie_5$ \\ \hline
    $e_1$ & $-1$  & $e_4$ & $-e_2$ & $-ie_3$ & $ie_7$ & $-ie_5$ & $ie_6$ \\ \hline
    $e_2$ & $-e_4$  & $-1$ & $e_1$ & $-ie_6$ & $ie_5$ & $ie_7$ & $-ie_3$ \\ \hline
    $e_4$ & $e_2$  & $-e_1$ & $-1$ & $-ie_5$ & $-ie_6$ & $ie_3$ & $ie_7$ \\ \hline
    $ie_7$ & $ie_3$  & $ie_6$ & $ie_5$ & $1$ & $e_1$ & $e_2$ & $e_4$ \\ \hline
    $ie_3$ & $-ie_7$  & $-ie_5$ & $ie_6$ & $-e_1$ & $1$ & $e_4$ & $-e_2$ \\ \hline
    $ie_6$ & $ie_5$  & $-ie_7$ & $-ie_3$ & $-e_2$ & $-e_4$ & $1$ & $e_1$ \\ \hline
    $ie_5$ & $-ie_6$ & $ie_3$ & $-ie_7$ & $-e_4$ & $e_2$ & $-e_1$ & $1$ \\ \hline
  \end{tabular}
\end{center}
\caption{A split-octonion subalgebra of $\mathbb{O}_{\mathbb{C}}$}
\label{tab:sploctotable}
\end{table}
\begin{prop}
The octonion $\mathbb{O}$ and split-octonion $\mathbb{O}_s$ composition algebras are real subalgebras of the bioctonion algebra $\mathbb{O}_{\mathbb{C}}$.
\end{prop}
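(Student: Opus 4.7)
The proof strategy is to exhibit, for each of $\mathbb{A}=\mathbb{O}$ and $\mathbb{A}=\mathbb{O}_s$, an explicit real $8$-dimensional subspace $W_{\mathbb{A}}\subset\mathbb{O}_{\mathbb{C}}$ that contains the identity, is closed under the bioctonion product, and on which $\eta$ restricts to a real-valued quadratic form of the correct signature. Proposition~3.1 will then identify $W_{\mathbb{A}}$ with $\mathbb{A}$, since up to isomorphism the only real composition algebras of dimension eight are $\mathbb{O}$ (with anisotropic $\eta$) and $\mathbb{O}_s$ (with isotropic $\eta$ of signature $(4,4)$).

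For $\mathbb{O}$, I take $W_{\mathbb{O}}=\{\varphi_1+i\cdot 0 : \varphi_1\in\mathbb{O}\}$. The inclusion $\mathbb{O}\hookrightarrow\mathbb{O}_{\mathbb{C}}$ is immediate from the definition of $\mathbb{O}_{\mathbb{C}}$, the restricted multiplication is the octonion product, and $\eta|_{W_{\mathbb{O}}}(\varphi_1)=\varphi_1\overline{\varphi_1}$ is the standard positive-definite octonion norm, which is anisotropic. By Proposition~3.1, $W_{\mathbb{O}}\cong\mathbb{O}$ as real composition algebras.

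For $\mathbb{O}_s$ I take $W_{\mathbb{O}_s}$ to be the real span of the eight elements $\{1, e_1, e_2, e_4, ie_3, ie_5, ie_6, ie_7\}$ labelling the rows and columns of Table~1. Closure under multiplication reduces to three observations: (i) $\{1, e_1, e_2, e_4\}$ spans a quaternion subalgebra of $\mathbb{O}$; (ii) since $i$ commutes with every $e_j$, any product $e_j\cdot(ie_k)=i(e_je_k)$ with $j\in\{1,2,4\}$ and $k\in\{3,5,6,7\}$ stays in the ``$i$-shifted'' part because $e_je_k$ lies in the real span of $\{e_3, e_5, e_6, e_7\}$; and (iii) $(ie_k)(ie_l)=-e_ke_l$ lies back in the real span of $\{1, e_1, e_2, e_4\}$ for $k,l\in\{3,5,6,7\}$. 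Once a Fano-plane convention for octonion multiplication is fixed, these three statements are precisely what Table~1 tabulates. Since $\bar u=-u$ for every non-identity basis element, the polarization $u_\alpha\bar u_\beta+u_\beta\bar u_\alpha=-(u_\alpha u_\beta+u_\beta u_\alpha)$ vanishes for distinct basis elements by anticommutativity of distinct imaginary units, so $\eta$ is diagonal in this basis with $\eta(1)=\eta(e_j)=1$ for $j\in\{1,2,4\}$ and $\eta(ie_k)=(ie_k)(-ie_k)=-i^2 e_k^2=-1$ for $k\in\{3,5,6,7\}$. Thus $\eta|_{W_{\mathbb{O}_s}}$ is real-valued of signature $(4,4)$ and isotropic, and by Proposition~3.1, $W_{\mathbb{O}_s}\cong\mathbb{O}_s$.

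The principal bookkeeping burden is verifying the closure claims (ii) and (iii) for the split-octonion subalgebra, which is encoded entry-by-entry in Table~1 and requires no algebraic input beyond the standard multiplication for $\mathbb{O}$; once closure is in place, the signature computation is mechanical from $i^2=-1$ and $\bar e_k=-e_k$, and the identification with $\mathbb{O}_s$ is a direct appeal to the classification recalled in Proposition~3.1.
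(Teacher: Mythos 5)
Your proposal is correct and follows essentially the same route as the paper: embed $\mathbb{O}$ as the elements with vanishing imaginary part, take the real span of a quaternion basis $\{1,e_1,e_2,e_4\}$ together with $\{ie_3,ie_5,ie_6,ie_7\}$ for the split case, verify closure (the paper's Table~1), compute that $\eta$ restricts to a real form of signature $(4,4)$, and invoke Proposition~3.1 to identify the result with $\mathbb{O}_s$. Your added detail on why closure holds and why $\eta$ is diagonal in this basis is a modest elaboration of the paper's argument, not a different method.
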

\begin{proof}
As $\mathbb{O}_{\mathbb{C}}=\mathbb{O}\otimes\mathbb{C}$, the octonion algebra is taken to be
\begin{displaymath}
\mathbb{O}=\{\psi_r\in\mathbb{O}_{\mathbb{C}}\thinspace\thinspace |\thinspace\thinspace \psi_r=\varphi+i0\quad\varphi\in\mathbb{O}\}
\end{displaymath}
where the quadratic form over $\mathbb{O}_{\mathbb{C}}$ reduces to $\eta(\psi_r)=\psi_r\overline{\psi}_r=\varphi\overline{\varphi}\in\mathbb{R}$, which is the usual anisotropic quadratic form for which $\mathbb{O}$ is a composition algebra.

\indent For the split-octonion case, we denote a basis for $\mathbb{O}_{\mathbb{C}}$ via the set $\{e_i,ie_i\}$ where $e_i$ ($i=0,1,...,7$) form a basis for $\mathbb{O}$ as a real vector space satisfying
\begin{displaymath}
e_0=1
\end{displaymath}
\begin{displaymath}
e_i^2=-1\quad(i=1,2,...,7)
\end{displaymath}
\begin{displaymath}
e_{i+1}e_{i+2}=e_{i+4}=-e_{i+2}e_{i+1}\quad\textrm{(mod 7)}
\end{displaymath}
\begin{displaymath}
e_{i+2}e_{i+4}=e_{i+1}=-e_{i+4}e_{i+2}\quad\textrm{(mod 7)}.
\end{displaymath}
We now choose eight basis units from $\{e_i,ie_i\}$ consisting of a quaternion basis in $\mathbb{O}$, for example, $1,e_1,e_2,e_4$ and taking $ie_i$ such that ($i\neq 0,1,2,4$).  Consider the real vector subspace $W\subset\mathbb{O}_{\mathbb{C}}$ spanned by these basis units
\begin{displaymath}
W=\{\psi_s\in\mathbb{O}_{\mathbb{C}}\thinspace\thinspace |\thinspace\thinspace \psi_s=a_0+a_1e_1+a_2e_2+a_4e_4+i(a_3e_3+a_5e_5+a_6e_6+a_7e_7)\}
\end{displaymath}
Using the multiplicative properties of the octonions, one can construct a multiplication table for the basis units of $W$ where $W$ is seen to be closed (see Table \ref{tab:sploctotable}).  Conjugation in $W$ is induced by octonionic conjugation in $\mathbb{O}_\mathbb{C}$
\begin{equation}
\overline{\psi_s}=a_0-a_1e_1-a_2e_2-a_4e_4+i(-a_3e_3-a_5e_5-a_6e_6-a_7e_7).
\end{equation}
The quadratic form for $\mathbb{O}_{\mathbb{C}}$ then reduces to
\begin{equation}
\eta(\psi_s)=\psi_s\overline{\psi}_s=a_0^2+a_1^2+a_2^2+a_4^2-a_3^2-a_5^5-a_6^2-a_7^2\in\mathbb{R}
\end{equation}
which is isotropic and of signature $(4,4)$.  Hence, $W$ forms a $\textrm{dim}_{\mathbb{R}}(W)=8$ split composition algebra and by Prop. 3.1 must be isomorphic to the algebra of split-octonions $\mathbb{O}_s$.
\end{proof}
It was shown by Shukuzawa \cite{shuk1} that $G_2(\mathbb{C})$ acts transitively on the space of all elements having the same norm in $\mathbb{O}_{\mathbb{C}}$.  We shall recall some useful theorems from Shukuzawa \cite{shuk1} here, which classify orbits for elements of $\mathbb{O}_{\mathbb{C}}$ and its real subalgebras $\mathbb{O}$ and $\mathbb{O}_{s}$.  For notational convenience we set $ie_i=e_i'$, for the case of the split-octonions.
\begin{thm}
Any non-zero element $\psi\in\mathbb{O}_{\mathbb{C}}$ can be transformed to the following canonical form by some element of $G_2(\mathbb{C})$:\\\\
If $\eta(\psi)\neq 0$:
\begin{equation}
\psi=(a_0+ia_1)e_i\quad (i=1,2,...,7)\thinspace\thinspace (a_0 > 0\thinspace\thinspace\textrm{or}\thinspace\thinspace a_0=0,\thinspace a_1 > 0),
\end{equation}
If $\eta(\psi)= 0$: 
\begin{equation}
\psi=e_i+ie_j\quad (i\neq j,\thinspace\thinspace i,j=1,2,...,7).
\end{equation}
Moreover, their orbits in $\mathbb{O}_{\mathbb{C}}$ under $G_2(\mathbb{C})$ are distinct, and the union of all their orbits and $\{ 0 \}$ is the whole space $\mathbb{O}_{\mathbb{C}}$.
\end{thm}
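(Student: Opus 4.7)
The plan is to leverage the immediately preceding result of Shukuzawa---that $G_2(\mathbb{C})$ acts transitively on each non-empty level set of the quadratic norm $\eta$ on $\mathbb{O}_{\mathbb{C}}$---to reduce the theorem to choosing canonical representatives fiberwise. Since $G_2(\mathbb{C})\subset\textrm{SO}(\eta)$ preserves $\eta$, the norm is an absolute invariant of each orbit; combined with Shukuzawa's transitivity, this identifies the $G_2(\mathbb{C})$-orbits on $\mathbb{O}_{\mathbb{C}}\setminus\{0\}$ with precisely the non-empty fibers $\eta^{-1}(c)\setminus\{0\}$ for $c\in\mathbb{C}$. The remaining task is purely computational: exhibit one element of each fiber in the stated canonical form and verify that the positivity conventions eliminate all residual ambiguity.

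For the anisotropic branch I would compute $\eta\bigl((a_0+ia_1)e_i\bigr)=(a_0+ia_1)^2\eta(e_i)=(a_0+ia_1)^2$ for any $i\ge 1$. Since the squaring map $z\mapsto z^2$ on $\mathbb{C}^{\times}$ is two-to-one with fibers $\{\pm z\}$, every nonzero $c\in\mathbb{C}$ is the norm of exactly two such canonical elements $\pm(a_0+ia_1)e_i$, and the prescribed condition $a_0>0$, or $a_0=0$ with $a_1>0$, selects precisely one of them. The choice of index $i$ is immaterial: by transitivity of $G_2(\mathbb{C})$ on the fiber $\eta^{-1}(c)$, the elements $(a_0+ia_1)e_i$ with $i=1,\dots,7$ all lie in a common orbit and any of them may serve as the representative. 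For the isotropic branch I would verify directly that $\eta(e_i+ie_j)=\eta(e_i)+i^2\eta(e_j)=1-1=0$ when $i\neq j$ (the cross term in the polar form vanishes because distinct octonionic units are $\eta$-orthogonal), placing each $e_i+ie_j$ on the null cone, and then invoke Shukuzawa's transitivity on $\eta^{-1}(0)\setminus\{0\}$ to move any nonzero isotropic element to this form.

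The main obstacle I anticipate is not conceptual but delicate bookkeeping. First, one must confirm that Shukuzawa's transitivity statement is valid on the null cone as well as on the non-degenerate norm spheres; should the cited argument treat only the non-degenerate case, a separate normal-form analysis---for instance, using the real subgroup $G_2\subset G_2(\mathbb{C})$ to rotate the anisotropic part $\varphi_1$ of $\psi=\varphi_1+i\varphi_2$ into standard position, followed by a further complex $G_2(\mathbb{C})$-element aligning $\varphi_2$---would be required. Second, one must verify the displayed orbits are genuinely disjoint: distinct canonical norms $(a_0+ia_1)^2$ index distinct orbits because $\eta$ is an invariant, while the null orbit is separated from the non-null ones by the vanishing of $\eta$. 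Once these checks are in place, the canonical forms together with $\{0\}$ evidently exhaust $\mathbb{O}_{\mathbb{C}}$, which closes the argument.
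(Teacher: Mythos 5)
First, a point of calibration: the paper does not actually prove this theorem --- it is quoted from Shukuzawa \cite{shuk1} as background, so there is no internal argument to compare yours against. Judged on its own terms, your proposal has one genuine gap: the central reduction, namely that the $G_2(\mathbb{C})$-orbits on $\mathbb{O}_{\mathbb{C}}\setminus\{0\}$ coincide with the non-empty fibers $\eta^{-1}(c)\setminus\{0\}$, is false on the full eight-dimensional algebra. Since $G_2(\mathbb{C})=\mathrm{Aut}(\mathbb{O}_{\mathbb{C}})$ fixes the identity $e_0=1$ and preserves the orthogonal splitting $\mathbb{O}_{\mathbb{C}}=\mathbb{C}\cdot 1\oplus \mathrm{Im}(\mathbb{O}_{\mathbb{C}})$, the coefficient of $e_0$ is a second, independent orbit invariant. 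For example $1+e_1$ and $\sqrt{2}\,e_1$ both have $\eta=2$ yet lie in different orbits, and $1$ itself can never be moved to any $(a_0+ia_1)e_i$ with $i\geq 1$. The same obstruction defeats your closing sentence: the orbit of a trace-free canonical representative stays inside the seven-dimensional subspace $\mathrm{Im}(\mathbb{O}_{\mathbb{C}})$, so the union of those orbits and $\{0\}$ cannot ``evidently exhaust'' $\mathbb{O}_{\mathbb{C}}$. This imprecision is inherited from the paper's transcription of Shukuzawa (every displayed canonical form is purely imaginary, which should have been the tip-off); a correct proof must either restrict to $\mathrm{Im}(\mathbb{O}_{\mathbb{C}})\cong\mathbb{C}^7$ throughout or carry the invariant real part along explicitly in the normal forms.

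The fiberwise bookkeeping itself is sound and worth keeping: $\eta(\lambda e_i)=\lambda^2$, so the convention $a_0>0$ or ($a_0=0$, $a_1>0$) selects exactly one square root of each $c\neq 0$; $\eta(e_i+ie_j)=1-1=0$ for $i\neq j$ because distinct imaginary units are $\eta$-orthogonal; and invariance of $\eta$ separates the listed orbits. But be aware that, even after restricting to the imaginary subspace, your whole argument rests on transitivity of $G_2(\mathbb{C})$ on the level sets of $\eta$ there --- including the punctured null cone, which you rightly flag as needing separate justification (the nonzero level sets are the complex quadric $G_2(\mathbb{C})/SL(3,\mathbb{C})$, and the null cone requires its own homogeneity argument). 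That transitivity is essentially the entire content of the theorem, so what you have written is a careful unpacking of the cited result rather than an independent proof --- which, to be fair, is exactly the level at which the paper itself treats the statement.
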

\begin{rmk}
The canonical zero-norm orbit elements of $\mathbb{O}_{\mathbb{C}}$ generate a non-associative Grassmann algebra \cite{catto}, satisfying $\psi_i^2=0$, $\psi_i\psi_j=-\psi_j\psi_i$ and $\psi_i\overline{\psi}_j=-\psi_j\overline{\psi}_i$. 
\end{rmk}

\begin{thm}
Any element $\varphi\in\mathbb{O}$ can be transformed to the following canonical form by some element of $G_2$:\\
\begin{equation}
\varphi=a_0e_i\quad (i=1,2,...,7)\thinspace\thinspace (a_0=\sqrt{\eta(\varphi)} \geq 0).
\end{equation}
Moreover, their orbits in $\mathbb{O}$ under $G_2$ are distinct, and the union of all their orbits and $\{0\}$ yields the whole space $\mathbb{O}$.
\end{thm}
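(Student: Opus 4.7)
The plan is to follow the strategy of Theorem 3.3, but working with the compact real form $G_2 = \textrm{Aut}(\mathbb{O}) \subset G_2(\mathbb{C})$ acting on the real subalgebra $\mathbb{O} \subset \mathbb{O}_{\mathbb{C}}$. The structural observation I would start from is that every algebra automorphism of $\mathbb{O}$ automatically fixes the unit $1$ and preserves the norm $\eta$, since conjugation and $\eta$ are determined by the algebra structure; consequently $G_2$ preserves the orthogonal splitting $\mathbb{O} = \mathbb{R}\cdot 1 \oplus \textrm{Im}(\mathbb{O})$ and restricts on the seven-dimensional imaginary subspace to an action by isometries of the positive-definite form $\eta|_{\textrm{Im}(\mathbb{O})}$.

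The essential input I would invoke is the classical fact that $G_2$ acts transitively on the unit sphere $S^6 \subset \textrm{Im}(\mathbb{O})$, and therefore on every level set $\{\varphi_1 \in \textrm{Im}(\mathbb{O}) : \eta(\varphi_1)=r^2\}$ with $r>0$. This is the real counterpart of the transitivity of $G_2(\mathbb{C})$ on nonzero norm-level sets in $\mathbb{O}_{\mathbb{C}}$ used in Theorem 3.3, and I would simply cite the corresponding real statement from \cite{shuk1}. Given $\varphi \in \mathbb{O}$, I would decompose $\varphi = \textrm{Re}(\varphi) + \textrm{Im}(\varphi)$, leave the real part invariant, and apply a $G_2$ element rotating $\textrm{Im}(\varphi)$ onto a chosen axis $e_i$; in the purely imaginary case this yields exactly $\varphi = a_0 e_i$ with $a_0 = \sqrt{\eta(\varphi)}$.

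For orbit distinctness and exhaustion, I would argue that the scalar part together with $\eta$ is a complete set of $G_2$-invariants, so two elements share a canonical form iff they carry the same invariants iff they lie in the same orbit; meanwhile every nonzero element reaches a canonical representative by the construction above, while $\{0\}$ is itself a fixed point. This yields the disjoint orbit decomposition of $\mathbb{O}$ asserted in the theorem, in perfect parallel with the bioctonion picture supplied by Theorem 3.3.

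The only genuinely nontrivial step is the transitivity of $G_2$ on $S^6$, which is a nontrivial Lie-theoretic fact reflected in the homogeneous space identification $S^6 \cong G_2/SU(3)$; however it is well-established in the literature cited. Once this is in hand, the rest is essentially bookkeeping: invariance of the decomposition $\mathbb{O} = \mathbb{R}\cdot 1 \oplus \textrm{Im}(\mathbb{O})$ under automorphisms, transitivity on concentric spheres in the imaginary part, and verification that $(\textrm{Re}(\varphi), \eta(\varphi))$ separate orbits.
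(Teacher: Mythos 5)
Your argument is essentially correct, but note that the paper supplies no proof of this statement at all: it is quoted, together with the complex and split cases, directly from Shukuzawa \cite{shuk1} with the remark ``we shall recall some useful theorems,'' so your proposal is a genuine self-contained argument where the paper has only a citation. The route you take --- $G_2=\textrm{Aut}(\mathbb{O})$ fixes $1$ and preserves $\eta$, hence preserves the orthogonal splitting $\mathbb{O}=\mathbb{R}\cdot 1\oplus\textrm{Im}(\mathbb{O})$ and acts by isometries on $\textrm{Im}(\mathbb{O})\cong\mathbb{R}^7$; transitivity on $S^6\cong G_2/SU(3)$ then gives transitivity on every sphere of radius $r>0$, and $(\textrm{Re}\,\varphi,\eta(\varphi))$ is a complete set of invariants --- is the standard one and is sound, with the homogeneous-space fact being the only nontrivial input, exactly as you say. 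What your argument buys over the paper's bare citation is that it exposes an imprecision in the statement as printed: since every automorphism fixes $1$, an element with nonzero real part can never be carried to the purely imaginary form $a_0e_i$, and $a_0=\sqrt{\eta(\varphi)}$ alone does not separate, e.g., $1+e_1$ from $\sqrt{2}\,e_1$, even though both have $\eta=2$. The theorem as literally stated (canonical form, distinctness of orbits, and exhaustion of ``the whole space $\mathbb{O}$'') is therefore true only on the trace-zero subspace $\textrm{Im}(\mathbb{O})$, which is presumably the setting of Shukuzawa's original result; your decomposition $\varphi=\textrm{Re}(\varphi)+\textrm{Im}(\varphi)$ quietly records this, but you should state explicitly that either one restricts to $\textrm{Im}(\mathbb{O})$ or one must append the real part to the canonical form and to the list of invariants. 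The same caveat applies to the complex case you model this on, since $G_2(\mathbb{C})$ likewise fixes the identity.
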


\begin{thm}
Any non-zero element $\psi_s\in\mathbb{O}_{s}$ can be transformed to the following canonical form by some element of $G_{2(2)}$:\\\\
If $\eta(\psi_s) > 0$:
\begin{equation}
\psi_s=a_0e_i\quad (i=1,2,...,7)\thinspace\thinspace (a_0=\sqrt{\eta(\varphi)} > 0)
\end{equation}
If $\eta(\psi_s) < 0$:
\begin{equation}
\psi_s=a_0e_i'\quad (i=1,2,...,7)\thinspace\thinspace (a_0=\sqrt{-\eta(\varphi)} > 0)
\end{equation}
If $\eta(\psi_s)= 0$: 
\begin{equation}
\psi=e_i+e_j'\quad (i\neq j,\thinspace\thinspace i,j=1,2,...,7).
\end{equation}
Moreover, their orbits in $\mathbb{O}_{s}$ under $G_{2(2)}$ are distinct, and the union of all their orbits and $\{ 0 \}$ is the whole space $\mathbb{O}_{s}$.
\end{thm}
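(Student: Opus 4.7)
The plan is to establish the theorem by descending Shukuzawa's bioctonion classification (Theorem 3.3) to the real subalgebra $\mathbb{O}_s \subset \mathbb{O}_{\mathbb{C}}$ (Proposition 3.2), then verifying that the transforming element may be chosen inside $G_{2(2)}$ rather than merely in $G_2(\mathbb{C})$. First I would identify $G_{2(2)} = \textrm{Aut}(\mathbb{O}_s)$ with the real form of $G_2(\mathbb{C})$ fixing the real structure whose fixed set is $\mathbb{O}_s$. Since $G_{2(2)}$ preserves $\eta$, the three cases $\eta(\psi_s) > 0$, $\eta(\psi_s) < 0$, and $\eta(\psi_s) = 0$ are mutually exclusive, which yields distinctness of the three orbit families automatically; exhaustiveness (together with $\{0\}$) is immediate since every $\psi_s \in \mathbb{O}_s$ has a well-defined value of $\eta(\psi_s) \in \mathbb{R}$.

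For the non-null cases I would argue by transitivity on level sets. Theorem 3.3 supplies $g \in G_2(\mathbb{C})$ reducing $\psi_s$ to a scalar multiple of a single basis vector; the reality of $\psi_s$ together with the signature structure encoded in Table~\ref{tab:sploctotable} then forces the target to be $a_0 e_i$ with $i \in \{1,2,4\}$ when $\eta(\psi_s) > 0$, and $a_0 e_i'$ with $i \in \{3,5,6,7\}$ when $\eta(\psi_s) < 0$, where $a_0 = \sqrt{|\eta(\psi_s)|} > 0$ in each case. To promote $g$ to an element of $G_{2(2)}$ I would compare dimensions: $\dim G_{2(2)} = 14$, the generic $G_2(\mathbb{C})$-stabiliser $SL(3,\mathbb{C})$ has two real forms of dimension $8$ (namely $SU(2,1)$ and $SL(3,\mathbb{R})$), and the appropriate pseudo-sphere in $\textrm{Im}\thinspace\mathbb{O}_s \cong \mathbb{R}^{3,4}$ has real dimension $6$, matching $14 - 8$. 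Thus each level set $\{\eta = c\neq 0\}$ is a single $G_{2(2)}$-orbit, yielding the claimed canonical form.

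The main obstacle will be the null case $\eta(\psi_s) = 0$. Theorem 3.3 supplies the bioctonion canonical form $e_i + ie_j$, which under the identification $ie_j = e_j'$ already lies in $\mathbb{O}_s$ and matches the claimed form $e_i + e_j'$. The subtlety is that a single $G_2(\mathbb{C})$-orbit inside the complex null cone could a priori split into several $G_{2(2)}$-orbits within the real null cone of $\mathbb{O}_s$. To rule this out I would identify the non-zero null locus of $\textrm{Im}\thinspace\mathbb{O}_s$ with the minimal nilpotent orbit of $\mathfrak{g}_{2(2)}$ acting on its seven-dimensional fundamental representation, which is known to be a single connected orbit. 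Alternatively, one may construct the conjugating element explicitly as a product of elementary rotations among the compact generators $\{e_1,e_2,e_4\}$ and split generators $\{e_3',e_5',e_6',e_7'\}$ preserving the multiplication table; the closure properties of Table~\ref{tab:sploctotable} guarantee that each such elementary rotation lies in $G_{2(2)}$.
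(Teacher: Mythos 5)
The paper does not actually prove this statement: Theorems 3.3--3.5 are quoted verbatim from Shukuzawa \cite{shuk1} (``we shall recall some useful theorems from Shukuzawa\dots''), so there is no in-paper argument to compare yours against. Judged on its own terms, your sketch is the right general strategy (separate orbits by the invariant $\eta$, then prove transitivity on level sets), but it has several concrete gaps. First, and most importantly, every element of $G_{2(2)}=\textrm{Aut}(\mathbb{O}_s)$ fixes the identity and hence preserves the splitting $\mathbb{O}_s=\mathbb{R}\oplus\textrm{Im}\,\mathbb{O}_s$; an element with non-zero real part can therefore never be carried to any of the listed (purely imaginary) canonical forms. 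Your argument silently works inside $\textrm{Im}\,\mathbb{O}_s\cong\mathbb{R}^{3,4}$ (your pseudospheres live there), yet you assert exhaustiveness for all of $\mathbb{O}_s$. Any honest proof must either restrict the theorem to trace-zero elements or add the real part as a second invariant in the canonical form; as literally stated (and as you attempt to prove it) the exhaustiveness claim fails. The same defect already afflicts Theorem 3.4 as transcribed in the paper.

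Second, your dimension count ($14-8=6$) at best shows that each $G_{2(2)}$-orbit is \emph{open} in its level set; to conclude that a level set is a \emph{single} orbit you also need the level sets $\{\eta=c\}\setminus\{0\}$ in $\mathbb{R}^{3,4}$ to be connected (they are --- they fiber as $S^2\times\mathbb{R}^4$, $S^3\times\mathbb{R}^3$ and $S^2\times S^3\times\mathbb{R}_{>0}$ --- but you must say so, and you must also actually verify that the stabilizers are $SL(3,\mathbb{R})$, $SU(2,1)$ and an $8$-dimensional parabolic-type subgroup rather than citing their dimensions). Third, your appeal to ``the minimal nilpotent orbit of $\mathfrak{g}_{2(2)}$ acting on its seven-dimensional representation'' is a category error: nilpotent orbits live in the adjoint representation, not in $V_7$; the object you want is the highest-weight-vector (null-cone) orbit, and transitivity of the real form on the real points of that cone is exactly what needs proving, not something that can be imported as ``known.'' Finally, the opening move of descending from Theorem 3.3 does no work: the element $g\in G_2(\mathbb{C})$ it supplies need not preserve $\mathbb{O}_s$, and nothing in your subsequent argument uses $g$ --- the proof you are really proposing is a direct transitivity argument on level sets, and it should be presented as such.
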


\subsection{The Exceptional Jordan $C^{\star}$-Algebra}
\begin{defn}
(Kaplansky) Let $\mathcal{A}$ be a complex Banach space and a complex Jordan algebra equipped with an involution $\ast$.  Then $\mathcal{A}$ is a $Jordan$ $C^{\ast}$-$algebra$ if the following conditions are satisfied
\begin{equation}\begin{array}{rcl}
||x\circ y||\leq ||x|| \thinspace||y||\quad\forall x,y\in\mathcal{A}\hfill\\
||z||=||z^{\ast}||\quad\forall z\in\mathcal{A}\hfill\\
||\{zz^{\ast}z\}||=||z||^3\quad\forall z\in\mathcal{A}\hfill,
\end{array}
\end{equation}
where the Jordan triple product is given by
\begin{equation}\begin{array}{rcl}
\{xyz\}=(x\circ y)\circ z + (y\circ z)\circ x - (z\circ x)\circ y.\hfill
\end{array}
\end{equation}
\end{defn}
\begin{thm}
Each JB-algebra is the self-adjoint part of a unique Jordan $C^{\ast}$-algebra.
\end{thm}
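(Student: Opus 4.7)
The natural candidate for the Jordan $C^{\ast}$-algebra whose self-adjoint part is a given JB-algebra $A$ is the complexification $\mathcal{A}:=A\otimes_{\mathbb{R}}\mathbb{C}=A\oplus iA$. I would first extend the Jordan product $\mathbb{C}$-bilinearly by
\[
(a+ib)\circ(c+id)=(a\circ c-b\circ d)+i(a\circ d+b\circ c),
\]
and define the involution $(a+ib)^{\ast}:=a-ib$. One checks directly that $\circ$ remains commutative and satisfies the Jordan identity (since these are polynomial identities extending from the real generators), that $\ast$ is a conjugate-linear involution compatible with $\circ$, and that the $\ast$-fixed points are exactly $A$.

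The substantive step is to equip $\mathcal{A}$ with a complex Banach norm verifying conditions (i)--(iii) of Definition 3.9. For $z=a+ib\in\mathcal{A}$ one has $z\circ z^{\ast}=a^2+b^2\in A$, so the JB-estimate $\|a^2\|_A\le\|a^2+b^2\|_A$ suggests the candidate
\[
\|z\|:=\|z\circ z^{\ast}\|_A^{1/2}=\|a^2+b^2\|_A^{1/2}.
\]
To verify that this is in fact a norm and that the three Jordan $C^{\ast}$-identities hold, I would invoke the representation theory of JB-algebras: after splitting off the purely exceptional ideal (direct summands isomorphic to $J_3^{\mathbb{O}}$), the remainder embeds isometrically as self-adjoint operators on a real Hilbert space, and its complexification lies inside an associative $C^{\ast}$-algebra, where submultiplicativity, $\ast$-isometry, and the cube identity all descend from the ordinary $C^{\ast}$-identity $\|T^{\ast}T\|=\|T\|^2$. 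The exceptional summand is complexified to $J_3^{\mathbb{O}_{\mathbb{C}}}$ and handled separately using its spectral and trace structure. Completeness of $\mathcal{A}$ then follows from completeness of $A$ after verifying equivalence with the norm $\|a\|+\|b\|$.

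Uniqueness is comparatively formal. If $\mathcal{A}'$ is any Jordan $C^{\ast}$-algebra with self-adjoint part $A$, the decomposition $z=\tfrac{1}{2}(z+z^{\ast})+i\cdot\tfrac{1}{2i}(z-z^{\ast})$ presents $\mathcal{A}'$ as $A\oplus iA$ and forces the product and involution to agree with those above by $\mathbb{C}$-linearity and $\mathbb{C}$-antilinearity. Combining axioms (i)--(iii), a Kaplansky-type manipulation yields $\|z\|^{2}=\|z\circ z^{\ast}\|$, whose right-hand side lies in $A$ and is therefore already determined by $\|\cdot\|_A$; so the norm is unique as well. The principal obstacle is the norm construction itself: the cube identity $\|\{zz^{\ast}z\}\|=\|z\|^{3}$ is genuinely operator-theoretic, does not reduce algebraically to the JB-axioms on $A$, and for the $J_3^{\mathbb{O}}$-summand there is no faithful associative representation, so a separate exceptional-case argument is unavoidable. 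This is where the depth of the theorem (due to Wright) lies.
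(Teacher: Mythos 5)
The paper does not actually supply a proof of this theorem: its ``proof'' is a citation to Wright's 1977 paper, recording only that the essential ingredient is the existence of an exceptional Jordan $C^{\ast}$-algebra whose self-adjoint part is $J_3^{\mathbb{O}}$. Your overall architecture --- complexify $A$ to $A\oplus iA$, invoke the Alfsen--Shultz--St{\o}rmer structure theory to separate the special part (treated inside an enveloping associative $C^{\ast}$-algebra) from the purely exceptional part (treated via $J_3^{\mathbb{O}_{\mathbb{C}}}$) --- is indeed the shape of Wright's argument, so at that level you have reconstructed the right strategy.

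There is, however, a concrete error at the centre of both your norm construction and your uniqueness argument. The candidate $\|z\|:=\|z\circ z^{\ast}\|_A^{1/2}$ is not the Jordan $C^{\ast}$-norm, and the identity $\|z\|^{2}=\|z\circ z^{\ast}\|$ that your ``Kaplansky-type manipulation'' is supposed to deliver is false. Take $A$ to be the $2\times 2$ Hermitian matrices, so the Jordan $C^{\ast}$-algebra is $M_2(\mathbb{C})$ with the symmetrized product and the operator norm, and let $z=e_{12}$. Then $\|z\|=1$, while $z\circ z^{\ast}=\tfrac{1}{2}(zz^{\ast}+z^{\ast}z)=\tfrac{1}{2}I$, so $\|z\circ z^{\ast}\|^{1/2}=1/\sqrt{2}$; meanwhile $\{zz^{\ast}z\}=zz^{\ast}z=z$, so the third Kaplansky axiom $\|\{zz^{\ast}z\}\|=\|z\|^{3}$ is satisfied by the operator norm, not by your candidate. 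What is true in general is only the two-sided equivalence $\tfrac{1}{2}\|z\|^{2}\le\|z\circ z^{\ast}\|\le\|z\|^{2}$, so the norm cannot be recovered from $z\circ z^{\ast}$ alone, and uniqueness must come from a different mechanism (in Wright's treatment, from the fact that a Jordan $\ast$-isomorphism between Jordan $C^{\ast}$-algebras is automatically isometric). A secondary gap: the purely exceptional ideal furnished by the structure theorem is in general neither a direct summand of $A$ nor a sum of copies of $J_3^{\mathbb{O}}$ (one only gets $C_0(X,J_3^{\mathbb{O}})$-type descriptions after passing to quotients or biduals), so ``splitting off'' the exceptional part requires an extension argument rather than the direct-sum decomposition you assume.
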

\begin{proof}
The proof is given by Wright \cite{wright}, using the existence of an exceptional Jordan $C^{\star}$-algebra whose self-adjoint part is the exceptional Jordan algebra $J^{\mathbb{O}}_3$.
\end{proof}

The $exceptional$ $Jordan$ $C^{\ast}$-$algebra$ is the complexification of the exceptional Jordan algebra, given by
\begin{equation}
J^{\mathbb{O}_{\mathbb{C}}}_3=\{X=A+iB\thinspace\thinspace|\thinspace\thinspace A,B\in J^{\mathbb{O}}_3,\thinspace i^2=-1\},
\end{equation}
where the imaginary unit `$i$' is assumed to commute with all imaginary basis units $e_j$ ($j=1,...,7$) of $\mathbb{O}$.  A general element of the algebra takes the form 
\begin{equation}
X = \left(\begin{array}{ccc}z_1 & \psi_1 & \overline{\psi}_2 \\ \overline{\psi}_1 & z_2 & \psi_3 \\ \psi_2 & \overline{\psi}_3 & z_3 \end{array}\right) \qquad \qquad z_i \in \mathbb{C} \quad \psi_i \in \mathbb{O}_{\mathbb{C}}
\end{equation}
where it is seen $J^{\mathbb{O}_{\mathbb{C}}}_3$ is a Jordan algebra of degree three over the bioctonions.  As the bioctonions contain both the octonion and split-octonion algebras, $J^{\mathbb{O}_{\mathbb{C}}}_3$ contains $J^{\mathbb{O}}_3$ and $J^{\mathbb{O}_{s}}_3$.  One can define two types of involution for $J^{\mathbb{O}_{\mathbb{C}}}_3$
\begin{equation}
X^{\ast} =(\overline{X}^c)^T=(A-iB)^T,
\end{equation}
\begin{equation}
X^{\dagger} =(\overline{X})^T=(\overline{A}+i\overline{B})^T
\end{equation}
differing by the use of either complex or octonionic conjugation of the entries.  Using the complex involution, along with the spectral norm, $J^{\mathbb{O}_{\mathbb{C}}}_3$ becomes a Jordan $C^{\ast}$-algebra. Under this involution, only elements of the exceptional Jordan algebra $J^{\mathbb{O}}_3$ are self-adjoint.  Under the involution using octonionic conjugation, all elements of $J^{\mathbb{O}_{\mathbb{C}}}_3$ are self-adjoint.  Moreover, under this involution, the trace bilinear form $J^{\mathbb{O}_{\mathbb{C}}}_3 \times J^{\mathbb{O}_{\mathbb{C}}}_3 \rightarrow \mathbb{C}$
\begin{equation}
\langle X,Y\rangle=\textrm{tr}(X\circ Y^{\dagger})=\textrm{tr}(X\circ Y)
\end{equation}
is complex valued, as is required in later constructions.  The \textit{Freudenthal Product} $J^{\mathbb{O}_{\mathbb{C}}}_3 \times J^{\mathbb{O}_{\mathbb{C}}}_3 \rightarrow J^{\mathbb{O}_{\mathbb{C}}}_3$ is defined using the trace bilinear form as
\begin{equation}
X\times Y = X \circ Y - \frac{1}{2}(Y\textrm{tr}(X)+X\textrm{tr}(Y))+\frac{1}{2}(\textrm{tr}(X)\textrm{tr}(Y)-\textrm{tr}(X\circ Y))I
\end{equation}
An important special case yields the \textit{quadratic adjoint map}
\begin{equation}
X^{\natural}=X\times X = X^2-\textrm{tr}(X)X+\frac{1}{2}(\textrm{tr}(X)^2-\textrm{tr}(X^2))I,
\end{equation}
We can use the Freudenthal and Jordan product to define the \textit{cubic form}
\begin{equation}\label{15}
(X,Y,Z)=\textrm{tr}(X\circ (Y\times Z)). 
\end{equation}
A special case of this cubic form is
\begin{equation}
(X,X,X)=\textrm{tr}(X\circ(X \times X))
\end{equation}
Using the cubic form, one can express the determinant as 
\begin{equation}
\textrm{det}(X)=\frac{1}{3}\textrm{tr}(X\circ(X \times X))=N(X).
\end{equation}
where $N(X)$ denotes the \textit{cubic norm} of $X$.  
The structure group $\textrm{Str}(J^{\mathbb{O}_{\mathbb{C}}}_3)$, is comprised of all linear bijections on $J^{\mathbb{O}_{\mathbb{C}}}_3$ that leave the cubic norm (hence determinant) invariant up to a constant scalar multiple
\begin{equation}
N(s(X))=c\thinspace N(X)\qquad \forall s\in\textrm{Str}(J^{\mathbb{O}_{\mathbb{C}}}_3).
\end{equation}
The reduced structure group, $\textrm{Str}_0(J^{\mathbb{O}_{\mathbb{C}}}_3)=E_6(\mathbb{C})$ \cite{51}, consists of the transformations for which $c=1$ and contains the U-duality groups $E_{6(6)}$, $E_{6(-26)}$, of the $\mathcal{N}=8$ and exceptional magic $\mathcal{N}=2$, $D=5$ supergravities, respectively. 
\subsection{Freudenthal Triple System}
We follow the Freudenthal construction of Krutelevich et al. \cite{4,5,27}.  Given the exceptional Jordan $C^{\ast}$-algebra $J^{\mathbb{O}_{\mathbb{C}}}_3$, one can construct its corresponding Freudenthal triple system (FTS) by defining the vector space $\mathfrak{M}(J^{\mathbb{O}_{\mathbb{C}}}_3)$:
\begin{equation}
\mathfrak{M}(J^{\mathbb{O}_{\mathbb{C}}}_3)=\mathbb{C}\oplus\mathbb{C}\oplus J^{\mathbb{O}_{\mathbb{C}}}_3\oplus J^{\mathbb{O}_{\mathbb{C}}}_3.
\end{equation}
A general element $\mathcal{X}\in\mathfrak{M}(J^{\mathbb{O}_{\mathbb{C}}}_3)$ can be expressed as
\begin{equation}
\mathcal{X}=\left(\begin{array}{cc} \alpha & X \\ Y & \beta  \end{array}\right)\quad \alpha,\beta\in\mathbb{C}\quad X,Y\in J^{\mathbb{O}_{\mathbb{C}}}_3
\end{equation}
The FTS comes equipped with a non-degenerate bilinear antisymmetric quadratic form $\{\mathcal{X},\mathcal{Z}\}:\mathfrak{M}(J^{\mathbb{O}_{\mathbb{C}}}_3)\times\mathfrak{M}(J^{\mathbb{O}_{\mathbb{C}}}_3)\rightarrow\mathbb{C}$,
\begin{equation}\begin{array}{rcl}
\{\mathcal{X},\mathcal{Z}\}=\alpha\delta-\beta\gamma+\textrm{tr}(X\circ W)-\textrm{tr}(Y\circ Z) \\
\textrm{where}\quad \mathcal{X}=\left(\begin{array}{cc} \alpha & X \\ Y & \beta  \end{array}\right),\quad \mathcal{Z}=\left(\begin{array}{cc} \gamma & Z \\ W & \delta  \end{array}\right),
\end{array}
\end{equation}
a quartic form $q:\mathfrak{M}(J^{\mathbb{O}_{\mathbb{C}}}_3)\rightarrow \mathbb{C}$,
\begin{equation}
q(\mathcal{X})=-2[\alpha\beta-\textrm{tr}(X\circ Y)]^2-8[\alpha N(X)+\beta N(Y)-\textrm{tr}(X^{\natural}\circ Y^{\natural})]
\end{equation}
and a trilinear triple product $T:\mathfrak{M}(J^{\mathbb{O}_{\mathbb{C}}}_3)\times\mathfrak{M}(J^{\mathbb{O}_{\mathbb{C}}}_3)\times\mathfrak{M}(J^{\mathbb{O}_{\mathbb{C}}}_3)\rightarrow\mathfrak{M}(J^{\mathbb{O}_{\mathbb{C}}}_3)$,
\begin{equation}
\{T(\mathcal{X},\mathcal{Y},\mathcal{W}),\mathcal{Z}\}=q(\mathcal{X},\mathcal{Y},\mathcal{W},\mathcal{Z})
\end{equation}
where $q(\mathcal{X},\mathcal{Y},\mathcal{W},\mathcal{Z})$ is the linearization of $q(\mathcal{X})$ such that $q(\mathcal{X},\mathcal{X},\mathcal{X},\mathcal{X})=q(\mathcal{X})$.  Useful identities include \cite{27}
\begin{equation}\begin{array}{lll}
T(\mathcal{X}, \mathcal{X}, \mathcal{X})=\\(-\alpha^2\beta+\alpha\textrm{tr}(X\circ Y)-2N(Y),\quad\alpha\beta^2-\beta\textrm{tr}(X\circ Y)+2N(X), \\
\qquad\qquad\qquad\qquad\hspace{20pt} 2 Y\times X^{\natural}-2\beta Y^{\natural}-(\textrm{tr}(X\circ Y)-\alpha\beta)X,\\
\qquad\qquad\qquad\qquad\hspace{7pt} -2 X\times Y^{\natural}+2\alpha X^{\natural}+(\textrm{tr}(X\circ Y)-\alpha\beta)Y).
\end{array}
\end{equation}
\begin{equation}\begin{array}{lll}
\Lambda(\mathcal{X},\mathcal{Y})=3T(\mathcal{X}, \mathcal{X}, \mathcal{Y})+\{\mathcal{X},\mathcal{Y}\}\mathcal{X}=\\
(-(3\alpha\beta-\textrm{tr}(X\circ Y))\gamma+2\textrm{tr}((\alpha X - Y^{\natural})\circ W),\\
(3\alpha\beta-\textrm{tr}(X\circ Y))\delta-2\textrm{tr}((\beta Y - X^{\natural})\circ Z), \\
(3\alpha\beta-\textrm{tr}(X\circ Y))Z-2(\beta Y - X^{\natural})\times W+2(\alpha X - Y^{\natural})\delta-2Q(\mathcal{X})Z,\\
-(3\alpha\beta-\textrm{tr}(X\circ Y))W-2(\alpha X - Y^{\natural})\times Z+2(\beta Y - X^{\natural})\gamma-2Q(\mathcal{X}')W).
\end{array}
\end{equation}
where the Jordan triple product is used to define the quadratic polynomial
\begin{equation}
Q(\mathcal{X})=(\alpha\beta-\textrm{tr}(X\circ Y))Z+2\{X,Z,Y\}
\end{equation}
and 
\begin{equation}
Q(\mathcal{X}')W=Q(\mathcal{X})W-2Q(\mathcal{X})I\circ W.
\end{equation}

The automorphism group $\textrm{Aut}(\mathfrak{M}(J^{\mathbb{O}_{\mathbb{C}}}_3))=E_7(\mathbb{C})$ is the set of all transformations leaving the quadratic form and quartic form $q(\mathcal{X})=I_4(\mathcal{X})$ invariant.  Following Krutelevich \cite{27}, four types of transformations in $\textrm{Aut}(\mathfrak{M}(J^{\mathbb{O}_{\mathbb{C}}}_3))$ are:\\
For any $C\in J^{\mathbb{O}_{\mathbb{C}}}_3$, $\Phi(C)$:
\begin{equation}
\left(\begin{array}{cc} \alpha & X \\ Y & \beta  \end{array}\right)\mapsto\left(\begin{array}{cc} \alpha+\textrm{tr}(Y\circ C)+\textrm{tr}(X\circ C^{\natural})+\beta N(C) & X+\beta C \\ Y+X\times C+\beta C^{\natural} & \beta  \end{array}\right)
\end{equation}
For any $D\in J^{\mathbb{O}_{\mathbb{C}}}_3$, $\Psi(D)$:\\
\begin{equation}
\left(\begin{array}{cc} \alpha & X \\ Y & \beta  \end{array}\right)\mapsto\left(\begin{array}{cc} \alpha & X+Y\times D+\alpha D^{\natural} \\ Y+\alpha D & \beta+\textrm{tr}(X\circ D)+\textrm{tr}(Y\circ D^{\natural})+\alpha N(D)  \end{array}\right)
\end{equation}
For any $s\in\textrm{Str}(J^{\mathbb{O}_{\mathbb{C}}}_3)$ and $c\in\mathbb{C}$ s.t. $N(s(Z))=c\thinspace N(Z)$:
\begin{equation}
\Omega:\quad\left(\begin{array}{cc} \alpha & X \\ Y & \beta  \end{array}\right)\mapsto\left(\begin{array}{cc} c^{-1}\alpha & s(X) \\ s^{{*}^{-1}}(Y) & c\beta  \end{array}\right)
\end{equation}
where $s^{*}$ is the adjoint to $s$ with respect to the trace bilinear form.\\  Lastly, we have:
\begin{equation}
\Upsilon:\quad\left(\begin{array}{cc} \alpha & X \\ Y & \beta  \end{array}\right)\mapsto\left(\begin{array}{cc} -\beta & -Y \\ X & \alpha  \end{array}\right).
\end{equation}
When matrices $C$ and $D$ above are rank one (i.e., $C^{\natural}=0$, $D^{\natural}=0$), the transformations $\phi$ and $\psi$ simplify to:
\begin{equation}
\Phi(C):\left(\begin{array}{cc} \alpha & X \\ Y & \beta  \end{array}\right)\mapsto\left(\begin{array}{cc} \alpha+\textrm{tr}(Y\circ C) & X+\beta C \\ Y+X\times C & \beta  \end{array}\right)
\end{equation}
\begin{equation}
\Psi(D):\left(\begin{array}{cc} \alpha & X \\ Y & \beta  \end{array}\right)\mapsto\left(\begin{array}{cc} \alpha & X+Y\times D \\ Y+\alpha D & \beta+\textrm{tr}(X\circ D)  \end{array}\right).
\end{equation}

\begin{rmk}
Consider the space $\mathfrak{M}(\textrm{diag}(J^{\mathbb{O}_{\mathbb{C}}}_3))\subset\mathfrak{M}(J^{\mathbb{O}_{\mathbb{C}}}_3)$ with elements
\begin{equation}
\left(\begin{array}{cc} \alpha & X \\ Y & \beta  \end{array}\right)=\left(\begin{array}{cc} \alpha & (z_1,z_2,z_3) \\ (z_4,z_5,z_6) & \beta  \end{array}\right)\quad \alpha,\beta\in\mathbb{C}\quad X,Y\in J^{\mathbb{O}_{\mathbb{C}}}_3.
\end{equation}
This space is equivalent to the Freudenthal triple system $\mathfrak{M}(\mathfrak{J}_{\mathbb{C}})$ employed by Borsten et al. \cite{5} as the representation space of three qubits.  It would be interesting to find a quantum information interpretation for the full Freudenthal triple system $\mathfrak{M}(J^{\mathbb{O}_{\mathbb{C}}}_3)$.
\end{rmk}
\subsection{Extended Freudenthal Triple System}
Following G\"{u}naydin et al. \cite{28, 31}, we construct a vector space $\mathfrak{T}(J^{\mathbb{O}_{\mathbb{C}}}_3)$ by extending the FTS $\mathfrak{M}(J^{\mathbb{O}_{\mathbb{C}}}_3)$ by an extra complex coordinate:
\begin{equation}
\mathfrak{T}(J^{\mathbb{O}_{\mathbb{C}}}_3)=\mathfrak{M}(J^{\mathbb{O}_{\mathbb{C}}}_3)\oplus\mathbb{C}.
\end{equation}   
For brevity, we refer to the vector space $\mathfrak{T}(J^{\mathbb{O}_{\mathbb{C}}}_3)$ as the extended Freudenthal triple system (EFTS) over $J^{\mathbb{O}_{\mathbb{C}}}_3$.  Vectors in $\mathfrak{T}(J^{\mathbb{O}_{\mathbb{C}}}_3)$ are written in the form $\mathbf{X}=(\mathcal{X},\tau)$, where $\mathcal{X}\in\mathfrak{M}(J^{\mathbb{O}_{\mathbb{C}}}_3)$ belongs to the underlying FTS and $\tau\in\mathbb{C}$ is the extra complex coordinate.  The quartic symplectic distance $d(\mathbf{X},\mathbf{Y})$ between any two points $\mathbf{X}=(\mathcal{X},\tau)$ and $\mathbf{Y}=(\mathcal{Y},\kappa)$ in $\mathfrak{T}(J^{\mathbb{O}_{\mathbb{C}}}_3)$ is given by
\begin{equation}
d(\mathbf{X},\mathbf{Y})=q(\mathcal{X}-\mathcal{Y})-(\tau-\kappa+\{\mathcal{X},\mathcal{Y}\})^2.
\end{equation}
The norm of an arbitrary element $\mathbf{X}=(\mathcal{X},\tau)\in\mathfrak{T}(J^{\mathbb{O}_{\mathbb{C}}}_3)$ takes the form
\begin{equation}
\mathcal{N}(\mathbf{X})=d(\mathbf{X},0)=q(\mathcal{X})-\tau^2.
\end{equation}
The group leaving light-like separations $d(\mathbf{X},\mathbf{Y})=0$ invariant, the quasiconformal group of the EFTS \cite{10,13,28,31}, is now $\textrm{Inv}(\mathfrak{T}(J^{\mathbb{O}_{\mathbb{C}}}_3))=E_8(\mathbb{C})$.  The action of the Lie algebra of $\textrm{Inv}(\mathfrak{T}(J^{\mathbb{O}_{\mathbb{C}}}_3))$ on an arbitrary element of the EFTS $\mathbf{X}=(\mathcal{X},\tau)$ is given by \cite{28,31}:
\begin{equation}\begin{array}{ccl}
K(\mathcal{X})=0\quad U(\mathcal{X})=\mathcal{W}\quad S(\mathcal{X})=T(\mathcal{W},\mathcal{Z},\mathcal{X})\quad \mathcal{W},\mathcal{Z}\in\mathfrak{M}(J^{\mathbb{O}_{\mathbb{C}}}_3)\\
K(\tau)=2z\quad U(\tau)=\{\mathcal{W},\mathcal{X}\}\quad S(\tau)=2\{\mathcal{W},\mathcal{Z}\}\tau\quad z\in\mathbb{C}\\
\widetilde{U}(\mathcal{X})=\frac{1}{2}T(\mathcal{X},\mathcal{W},\mathcal{X})-\mathcal{W}\tau\\
\widetilde{U}(\tau)=-\frac{1}{6}\{T(\mathcal{X},\mathcal{X},\mathcal{X}),\mathcal{W}\}+\{\mathcal{X},\mathcal{W}\}\tau\\
\widetilde{K}(\mathcal{X})=-\frac{1}{6}zT(\mathcal{X},\mathcal{X},\mathcal{X})+z\mathcal{X}\tau\\
\widetilde{K}(\tau)=\frac{1}{6}z\{T(\mathcal{X},\mathcal{X},\mathcal{X}),\mathcal{X}\}+2z\tau^2.\\
\end{array}
\end{equation}
As noted by G\"{u}naydin, Koepsell and Nicolai \cite{28}, this gives a realization of $E_8(\mathbb{C})$ on $\mathbb{C}^{57}$, which remedies the problem encountered when $\mathcal{N}(\mathbf{X})=0$ and the quartic invariant $q$ takes negative values, forcing $\tau$ to have non-real solutions.

\section{Conclusion}
We have shown that $\mathcal{N}=8$ and $\mathcal{N}=2$ supergravity theories based on the octonions and split-octonions can be mathematically unified using the bioctonion composition algebra and its corresponding exceptional Jordan $C^{\ast}$-algebra, $J^{\mathbb{O}_{\mathbb{C}}}_3$.  Moreover, by constructing a Freudenthal triple system and its single variable extension over $J^{\mathbb{O}_{\mathbb{C}}}_3$, problematic solutions in $D=3$ were resolved.  The exceptional Jordan $C^{\ast}$-algebra $J^{\mathbb{O}_{\mathbb{C}}}_3$ and its Freudenthal triple system also proved useful in supporting the three qubit entanglement classification of Borsten et al.\cite{5}.\\
\indent Surely, there are further applications for Jordan algebraic structures based on the bioctonions, and it is interesting to consider the direct physical interpretations of such structures in M-theory \cite{35,43},\cite{47}-\cite{51}.  Along these lines, it is essential to consider structures over the integral bioctonions, enabling the study of the discrete U-duality orbits of $E_6(\mathbb{C})_{\mathbb{Z}}$, $E_7(\mathbb{C})_{\mathbb{Z}}$ and $E_8(\mathbb{C})_{\mathbb{Z}}$, with applications to topological strings \cite{10,12}, quantum information theory \cite{1}-\cite{9},\cite{32}-\cite{41} and automorphic black hole partition functions.


\begin{thebibliography}{-label}

\bibitem[1]{1}M. J. Duff, S. Ferrara, \textit{$E_7$ and the tripartite entanglement of seven qubits}, Phys. Rev. D76:025018 (2007), \href{http://arxiv.org/abs/quant-ph/0609227}{arXiv:quant-ph/0609227}.
\bibitem[2]{2}M. J. Duff, S. Ferrara, \textit{Black hole entropy and quantum information}, Lect. Notes. Phys.755:93-114 (2008), \href{http://arxiv.org/abs/hep-th/0612036}{arXiv:hep-th/0612036}.
\bibitem[3]{3}M. J. Duff, S. Ferrara, \textit{$E_6$ and the bipartite entanglement of three qutrits}, Phys. Rev. D76:124023 (2007), \href{http://arxiv.org/abs/0704.0507}{arXiv:0704.0507 [hep-th]}.
\bibitem[4]{4}L. Borsten, D. Dahanayake, M. J. Duff, H. Ebrahim, W. Rubens, \textit{Black Holes, Qubits and Octonions}, Phys. Rep. 471:113-219 (2009), \href{http://arxiv.org/abs/0809.4685}{arXiv:0809.4685 [hep-th]}.
\bibitem[5]{5}L. Borsten, D. Dahanayake, M. J. Duff, W. Rubens, H. Ebrahim, \textit{Freudenthal triple classification of three-qubit entanglement}, \href{http://arxiv.org/abs/0812.3322}{arXiv:0812.3322 [quant-ph]}.
\bibitem[6]{6}L. Borsten, D. Dahanayake, M. J. Duff, W. Rubens, \textit{Black holes admitting a Freudenthal dual}, Phys. Rev. D80:026003 (2009), \href{http://arxiv.org/abs/0903.5517}{arXiv:0903.5517 [hep-th]}. 
\bibitem[7]{7}L. Borsten, D. Dahanayake, M. J. Duff, W. Rubens, \textit{Superqubits}, \href{http://arxiv.org/abs/0908.0706}{arXiv:0908.0706 [quant-ph]}. 
\bibitem[8]{8}M. J. Duff, \textit{String triality, black hole entropy and Cayley's hyperdeterminant}, Phys. Rev. D76:025017 (2007), \href{http://arxiv.org/abs/hep-th/0601134}{arXiv:hep-th/0601134}.
\bibitem[9]{9}P. L\'{e}vay, \textit{Stringy Black Holes and the Geometry of Entanglement}, Phys. Rev. D74:024030 (2006), \href{http://arxiv.org/abs/hep-th/0603136}{arXiv:hep-th/0603136}.
\bibitem[10]{10}M. G\"{u}naydin, A. Neitzke, B. Pioline, A. Waldron, \textit{BPS black holes, quantum attractor flows and automorphic forms}, \href{http://arxiv.org/abs/hep-th/0512296}{arXiv:hep-th/0512296}.
\bibitem[11]{11}M. G\"{u}naydin, A. Neitzke, B. Pioline, \textit{Topological Wave Functions and Heat Equations}, \href{http://arxiv.org/abs/hep-th/0607200}{arXiv:hep-th/0607200} .
\bibitem[12]{12}B. Pioline, \textit{Lectures on Black Holes, Topological Strings and Quantum Attractors}, \href{http://arxiv.org/abs/hep-th/0607227}{arXiv:hep-th/0607227} .
\bibitem[13]{13}M. G\"{u}naydin, \textit{Unitary Realizations of U-duality Groups as Conformal and Quasiconformal Groups and Extremal Black Holes of Supergravity Theories}, \href{http://arxiv.org/abs/hep-th/0502235}{arXiv:hep-th/0502235} .
\bibitem[14]{14}S. Bellucci, S. Ferrara, M. G\"{u}naydin, A. Marrani \textit{Charge Orbits of Symmetric Special Geometries and Attractors}, \href{http://arxiv.org/abs/hep-th/0606209}{arXiv:hep-th/0606209}.
\bibitem[15]{15}M. G\"{u}naydin, G. Sierra, and P. K. Townsend, \textit{The geometry of $N = 2$ Maxwell-Einstein supergravity and Jordan algebras}, \href{http://www.sciencedirect.com/science?_ob=ArticleURL&_udi=B6TVC-471XKJ7-D8&_user=10&_rdoc=1&_fmt=&_orig=search&_sort=d&_docanchor=&view=c&_acct=C000050221&_version=1&_urlVersion=0&_userid=10&md5=90eacb59a81502c9e9d78b8372a61c5f}{Nucl. Phys. B242:244 (1984)}.
\bibitem[16]{16}M. G\"{u}naydin, G. Sierra, and P. K. Townsend, \textit{Exceptional supergravity theories and the MAGIC square}, \href{http://www.sciencedirect.com/science?_ob=ArticleURL&_udi=B6TVN-46YMBP8-19&_user=10&_rdoc=1&_fmt=&_orig=search&_sort=d&_docanchor=&view=c&_acct=C000050221&_version=1&_urlVersion=0&_userid=10&md5=6a767802243be0f2dabe3f7c4028e22a}{Phys. Lett. B133:72 (1983)}.
\bibitem[17]{17}M. G\"{u}naydin, G. Sierra, and P. K. Townsend, \textit{Gauging the $d = 5$ Maxwell-Einstein supergravity theories: More on Jordan algebras}, \href{http://www.sciencedirect.com/science?_ob=ArticleURL&_udi=B6TVC-47317CP-M6&_user=10&_rdoc=1&_fmt=&_orig=search&_sort=d&_docanchor=&view=c&_acct=C000050221&_version=1&_urlVersion=0&_userid=10&md5=377c13bb06678a8c2d5bac080e61412b}{Nucl. Phys. B253:573 (1985)}.
\bibitem[18]{18}S. Ferrara and A. Marrani, \textit{$N = 8$ non-BPS attractors, fixed scalars and magic supergravities}, Nucl. Phys. B788:63 (2008), \href{http://arxiv.org/abs/0705.3866}{arXiv:0705.3866 [hep-th]}.
\bibitem[19]{19}M. Rios, \textit{Jordan Algebras and Extremal Black Holes}, Talk given at 26th International
Colloquium on Group Theoretical Methods in Physics (ICGTMP26), New York City, New York, 26-30 June 2006, \href{http://arxiv.org/abs/hep-th/0703238}{arXiv:hep-th/0703238}.
\bibitem[20]{20}S. Ferrara, M. G\"{u}naydin,\textit{Orbits and Attractors for $N=2$ Maxwell-Einstein Supergravity Theories in Five Dimensions}, Nucl. Phys. B759:1-19 (2006), \href{http://arxiv.org/abs/hep-th/0606108}{arXiv:hep-th/0606108}.
\bibitem[21]{21}M. Bianchi, S. Ferrara, \textit{Enriques and Octonionic Magic Supergravity Models}, JHEP 0802:054 (2008), \href{http://arxiv.org/abs/0712.2976}{arXiv:0712.2976 [hep-th]}.
\bibitem[22]{22}S. Ferrara and M. G\"{u}naydin, \textit{Orbits of exceptional groups, duality and BPS states in string theory}, Int. J. Mod. Phys.A13:2075 (1998), \href{http://arxiv.org/abs/hep-th/9708025}{arXiv:hep-th/9708025}.
\bibitem[23]{23}S. Ferrara and R. Kallosh, \textit{Universality of Sypersymmetric Attractors}, Phys. Rev. D54:5344 (1996), \href{http://arxiv.org/abs/hep-th/9603090}{arXiv:hep-th/9603090}.
\bibitem[24]{24}J. Baez, \textit{The Octonions}, Bull. Amer. Math. Soc. 39:145-205 (2002), \href{http://arxiv.org/abs/math/0105155}{arXiv:math/0105155 [math.RA]}.
\bibitem[25]{25}H. Ruegg, \textit{Octonionic Quark Confinement}, Acta. Phys. Pol. B9:1037 (1978), \href{http://th-www.if.uj.edu.pl/acta/vol09/pdf/v09p1037.pdf}{online print}.
\bibitem[26]{26}N. Jacobson, \textit{Structure and Representations of Jordan Algebras}, American Mathematical Society Colloquium Publications v. 39,  American Mathematical Society (1968).
\bibitem[27]{27}S. Krutelevich, \textit{Jordan algebras, exceptional groups, and higher composition laws}, \href{http://arxiv.org/abs/math/0411104}{arXiv:math/0411104 [math.NT]}.
\bibitem[28]{28}M. G\"{u}naydin, K. Koepsell, H. Nicolai, \textit{Conformal and Quasiconformal Realizations of Exceptional Lie Groups}, Commun. Math. Phys. 221:57-76 (2001), \href{http://arxiv.org/abs/hep-th/0008063}{arXiv:hep-th/0008063}.
\bibitem[29]{29}T. Dray, C. Manogue, \textit{The Exceptional Jordan Eigenvalue Problem}, 	IJTP 38:2901-2916 (1999), \href{http://arxiv.org/abs/math-ph/9910004}{arXiv:math-ph/9910004}.
\bibitem[30]{30}F. G\"{u}rsey, C. Tze, \textit{On the role of division, Jordan and related algebras in particle physics}, World Scientific, Singapore (1996).  
\bibitem[31]{31}M. G\"{u}naydin, O. Pavlyk, \textit{Quasiconformal Realizations of $E_{6(6)}$, $E_{7(7)}$, $E_{8(8)}$ and $SO(n+3,m+3)$, $N=4$ and $N>4$ Supergravity and Spherical Vectors}, \href{http://arxiv.org/abs/0904.0784}{arXiv:0904.0784 [hep-th]}.
\bibitem[32]{32}C. H. Bennett, S. Popescu, D. Rohrlich, J. Smolin, A. Thapliyal, \textit{Exact and Asymptotic Measures of Multipartite Pure State Entanglement}, \href{http://arxiv.org/abs/quant-ph/9908073}{arXiv:quant-ph/9908073}.
\bibitem[33]{33}W. D\"{u}r, G. Vidal, J. Cirac, \textit{Three qubits can be entangled in two inequivalent ways}, Phys. Rev. A 62, 062314 (2000), \href{http://arxiv.org/abs/quant-ph/0005115}{arXiv:quant-ph/0005115}.
\bibitem[34]{34}E. Briand, J. Luque, J. Thibon, F. Verstraete, \textit{The moduli space of three qutrit states}, \href{http://arxiv.org/abs/quant-ph/0306122}{arXiv:quant-ph/0306122}.
\bibitem[35]{35}A. Iqbal, A. Neitzke, C. Vafa, \textit{A Mysterious Duality}, \href{http://arxiv.org/abs/hep-th/0111068}{arXiv:hep-th/0111068}.
\bibitem[36]{36}A. Barenco, C. H. Bennett, R. Cleve, D. P. DiVincenzo, N. Margolus, P. Shor, T. Sleator, J. Smolin, H. Weinfurter, \textit{Elementary gates for quantum computation}, Phys. Rev. A52: 3457 (1995), \href{http://arxiv.org/abs/quant-ph/9503016v1}{arXiv:quant-ph/9503016}.
\bibitem[37]{37}M. A. Nielsen, I. L. Chuang, \textit{Quantum Computation and Quantum Information}, Cambridge University Press (2000).
\bibitem[38]{38}H. Terashima, M. Ueda, \textit{Nonunitary quantum circuit}, Int. J. Quantum Inform. 3: 633-647 (2005), \href{http://arxiv.org/abs/quant-ph/0304061}{arXiv:quant-ph/0304061}.
\bibitem[39]{39}A. T. Bolukbasi, T. Dereli, \textit{On the SU(3) parametrization of qutrits}, \href{http://arxiv.org/abs/quant-ph/0511111}{arXiv:quant-ph/0511111}.
\bibitem[40]{catto}S. Catto, \textit{Exceptional Projective Geometries and Internal Symmetries}, \href{http://arxiv.org/abs/hep-th/0302079}{arXiv:hep-th/0302079}.
\bibitem[41]{41}F. S. Khan, M. Perkowski, \textit{Title: Synthesis of Ternary Quantum Logic Circuits by Decomposition}, Proceedings of the 7th International Symposium on Representations and Methodology of Future Computing Technologies (2005), \href{http://arxiv.org/abs/quant-ph/0511041}{arXiv:quant-ph/0511041}.
\bibitem[42]{42}S. Ferrara, \textit{BPS black holes, supersymmetry and orbits of exceptional groups}, Fortsch. Phys. 47: 159-165 (1999), \href{http://arxiv.org/abs/hep-th/9801095}{arXiv:hep-th/9801095}.
\bibitem[43]{43}J. P. Gauntlett, \textit{Intersecting Branes}, \href{http://arxiv.org/abs/hep-th/9705011}{arXiv:hep-th/9705011}.
\bibitem[44]{44}G. Papadopoulos, P. K. Townsend, \textit{Intersecting M-branes}, 	Phys. Lett. B380: 273-279 (1996), \href{http://arxiv.org/abs/hep-th/9603087}{arXiv:hep-th/9603087}.
\bibitem[45]{45}C. Manogue, T. Dray, \textit{Octonions, E6, and Particle Physics}, \href{http://arxiv.org/abs/0911.2253}{arXiv:0911.2253 [math.RA]}.
\bibitem[46]{46}A. A. Tseytlin, \textit{Harmonic superpositions of M-branes}, Nucl. Phys. B475:149-163 (1996), \href{http://arxiv.org/abs/hep-th/9604035}{arXiv:hep-th/9604035}.
\bibitem[47]{47}N. Khviengia, Z. Khviengia, H. Lu, C. N. Pope, \textit{Intersecting M-branes and bound states}, Phys.Lett.B388:21-28 (1996), \href{http://arxiv.org/abs/hep-th/9605077}{arXiv:hep-th/9605077}.
\bibitem[48]{48}H. Lu, C. N. Pope, T. R. Tran, K. W. Xu , \textit{Classification of p-branes, NUTs, Waves and Intersections}, Nucl. Phys. B511:98-154 (1998), \href{http://arxiv.org/abs/hep-th/9708055}{arXiv:hep-th/9708055}.
\bibitem[49]{49}M. J. Duff, S. Ferrara, R. R. Khuri, J. Rahmfeld, \textit{Supersymmetry and Dual String Solitons}, Phys. Lett. B356: 479-486 (1995), \href{http://arxiv.org/abs/hep-th/9506057}{arXiv:hep-th/9506057}.
\bibitem[50]{50}R. R. Khuri, \textit{Supersymmetry, Duality and Bound States}, Talk given at Strings 96, \href{http://arxiv.org/abs/hep-th/9609094}{arXiv:hep-th/9609094}.
\bibitem[51]{51}Y. Ohwashi, \textit{$E_6$ Matrix Model}, Prog.Theor.Phys. 108: 755-782 (2002), \href{http://arxiv.org/abs/hep-th/0110106}{arXiv:hep-th/0110106}.
\bibitem[52]{52}T. A. Springer, F. D. Veldkamp, \textit{Octonions, Jordan Algebras and Exceptional Groups}, Springer Monographs in Mathematics, Springer, Berlin (2000).
\bibitem[53]{53}O. V. Ogievetsky, \textit{The Characteristic Equation for $3\times 3$ Matrices over Octaves}, Russian Math. Surveys 36: 189–190 (1981).
\bibitem[54]{54}P. Gao, \textit{4d-5d connection and the holomorphic anomaly}, \href{http://arxiv.org/abs/hep-th/0701027}{arXiv:hep-th/0701027}.
\bibitem[55]{wright}J. D. M. Wright, \textit{Jordan $C^{\ast}$-algebras}, \href{http://projecteuclid.org/DPubS?service=UI&version=1.0&verb=Display&handle=euclid.mmj/1029001946}{Michigan Math. J. 24 (1977)}.
\bibitem[56]{shuk1}O. Shukuzawa, \textit{Explicit Classification of Orbits in Cayley Algebras over the Groups of Type $G_2$}, Yokohama Math. J. 53 (2007).
\bibitem[57]{57}L. Borsten, D. Dahanayake, M. J. Duff, S. Ferrara, A. Marrani, W. Rubens, \textit{Observations on Integral and Continuous U-duality Orbits in N=8 Supergravity}, \href{http://arxiv.org/abs/1002.4223v1}{arXiv:1002.4223 [hep-th]}.
\bibitem[58]{58}M. Bianchi, S. Ferrara, R. Kallosh, \textit{Observations on Arithmetic Invariants and U-Duality Orbits in N=8 Supergravity}, \href{http://arxiv.org/abs/0912.0057}{arXiv:0912.0057 [hep-th]}.
\end{thebibliography}
\end{document}